\newtheorem{theorem}{\bf Theorem}[section]
\newtheorem{proposition}[theorem]{\bf Proposition}
\newtheorem{definition}{\bf Definition}[section]
\newtheorem{proof*}{\bf Proof}
\acrodef{CPS}{Cyber-Physical Systems}
\acrodef{SIS}{Susceptible Infected Susceptible}
\acrodef{SIR}{Susceptible Infected Recovered}
\acrodef{SIRS}{Susceptible Infected Recovered Susceptible}
\acrodef{hSIS}{Heterogeneous SIS}
\acrodef{hSIRS}{Heterogeneous SIRS}
\acrodef{CPS}{Cyber Physical Systems}
\acrodef{ICT}{Information and Communication Technology}
\acrodef{NLDS}{Non-Linear Dynamical System}
\title{\LARGE \bf Guarding Networks Through Heterogeneous Mobile Guards}
\author{Waseem Abbas, Sajal Bhatia and Xenofon Koutsoukos
\thanks{Waseem Abbas, Sajal Bhatia and Xenofon Koutsoukos are with Institute for Software Integrated Systems, Department of Electrical Engineering and Computer Science, Vanderbilt University, Nashville, TN 37212, USA
        {\tt\small \{waseem.abbas, sajal.bhatia, xenofon.koutsoukos\}@vanderbilt.edu}}%
        }%
\begin{document}

\maketitle
\thispagestyle{empty}
\pagestyle{empty}

\begin{abstract}
In this article, the issue of guarding multi-agent systems against a sequence of intruder attacks through mobile heterogeneous guards (guards with different ranges) is discussed. The article makes use of graph theoretic abstractions of such systems in which agents are the nodes of a graph and edges represent interconnections between agents. Guards represent specialized mobile agents on specific nodes with capabilities to successfully detect and respond to an attack within their guarding range. Using this abstraction, the article addresses the problem in the context of \textit{eternal security} problem in graphs. Eternal security refers to securing all the nodes in a graph against an infinite sequence of intruder attacks by a certain minimum number of guards. This paper makes use of heterogeneous guards and addresses all the components of the eternal security problem including the number of guards, their deployment and movement strategies. In the proposed solution, a graph is decomposed into clusters and a guard with appropriate range is then assigned to each cluster. These guards ensure that all nodes within their corresponding cluster are being protected at all times, thereby achieving the eternal security in the graph.

\end{abstract}

\section{INTRODUCTION}
\acresetall 
Networked systems such as \ac{CPS} have become an indispensable part of the modern society. Their ubiquitous presence in critical infrastructures such as power, water, and transportation has also led to growing concerns regarding their security against intruder attacks. An anomalous behavior by an individual agent may propagate and potentially result in the failure of the entire system. This not only demands a continuous surveillance of the system, but also the design and implementation of efficient mitigation strategies to minimize the overall impact of attacks, thereby motivating this study of guarding such systems.

Multi-agent systems can often be abstracted and modeled using a graph structure in which \textit{nodes} represent agents and \textit{edges} represent interconnections between these agents. This abstraction provides a framework to understand and analyze various system properties in terms of the underlying graph. It also provides a plethora of tools from graph theory that can be employed for an in-depth study of various problems of these systems. A specific problem in this domain is the \textit{eternal security} in graphs. In this article, this notion of eternal security in graphs, introduced in~\cite{burger2004infinite} and later studied in~\cite{goddard2005eternal,klostermeyer2007eternal,klostermeyer2009eternal}, is discussed and extended. 



The concept of eternal security addresses the problem of securing all the nodes in a graph against an infinite sequence of intruder attacks by a certain minimum number of guards. An \textit{intruder attack} on a node refers to any malicious activity on that node, for instance compromising a sensor node in a~\ac{CPS} to send fake values of the parameter being monitored. A \textit{guard} is an agent placed on a specific node that can detect and respond to an intruder attack within its range by moving from one node to another along the edges of a graph. If these guards are placed on nodes such that every node in the graph lies within the range of at least one guard, the graph is said to be \textit{secured} against an intruder attack on any of its nodes. The location of these guards is referred to as \textit{secure configuration}. The movement of guards along the edges from one node to another, however, might disturb this secure configuration as some nodes might end up not being within any guard's range as illustrated in Fig.~\ref{fig:intro1}. The notion of eternal security deals with such situations and ensures that all the nodes are secured against an arbitrary sequence of attacks. The objective is to determine a number of guards of given ranges, deploy them within a graph and outline a movement strategy such that a secure configuration is ensured for all the nodes at all times.

\begin{figure*}[!htb]
\begin{center}
\includegraphics[scale=0.9]{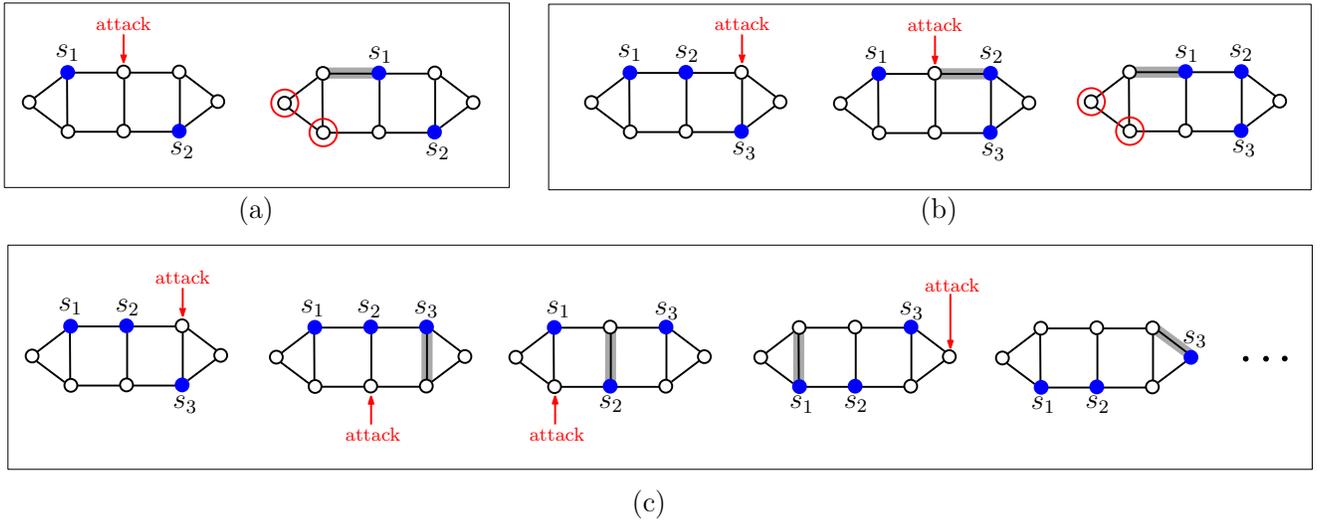}
\caption{(a) Two guards, $s_1$ and $s_2$, each capable of detecting and responding to an attack on
an adjacent vertex are securing the vertices of a graph through an initial secure configuration.
In the case of an attack on a vertex indicated by an arrow, $s_1$ moves towards it to
counter the attack. The resulting configuration of guards is unsecure as the circled vertices
have no guard in their neighborhoods. Here, the problem is that the \textit{number of guards} is not
sufficient. (b) Three guards $s_1$, $s_2$, and $s_3$, each having the range 1 are deployed. After
two intruder attacks, guards' configuration is unable to secure all the vertices. Though
the number of guards are sufficient in this case, the \textit{strategy} for the movement of guards to
counter attacks is not appropriate to eternally secure a graph against an arbitrary
sequence of attacks. (c) Guards move to counter attacks such that the resulting configuration
is always secure, i.e., for every vertex there always exists a guard to secure it. This
makes a graph eternally secure against any sequence of attacks.} 
\label{fig:intro1}
\end{center}
\end{figure*}

In many practical systems, 
these mobile guards can be thought of as unmanned devices (robots) placed at the gateway nodes, guarding the repeater and sensor nodes within their guarding ranges against malicious attacks. In scenarios like this, there may be a subset of nodes (gateways) that are more sensitive or critical than others (sensors) and therefore require more immediate consideration in case of an attack on any of these nodes. Eternally securing such a heterogeneous network requires making use of heterogeneous guards, i.e., guards with different ranges rather than having all guards with the same range. 

The problem of finding the number of guards required for the eternal security of graphs has been previously studied. Goddard et al.~\cite{goddard2005eternal} related this number to the domination number of a graph, whereas~\cite{goldwasser2008tight,klostermeyer2007eternal} provided bounds in terms of the independence number of a graph. In~\cite{klostermeyer2012vertex}, the required number of guards is compared to the vertex cover number. Recently, this problem is studied for the proper interval graphs in \cite{Braga1}, and a solution is provided in terms of the size of the largest independent set. In all these results, a guard was able to detect and respond to an attack only to the nodes in its immediate neighborhood, i.e., at 1 hop (edge length) distance. Abbas et al.~\cite{abbas2012securing} studied this problem for guards that can detect and respond to attacks that are at most $k$ hops from them. A limitation of the previous work done in this area has been that all guards were assumed to have same ranges, which may not be desirable in real-world scenarios discussed above. Besides this, all these studies have focused primarily on finding the number of guards required for the eternal security, paying less attention on their movement strategies, an essential component in achieving eternal security. 

This article studies the eternal security in graphs, addressing the aforementioned issues and in the process makes the following contributions:

\begin{itemize}
\item addresses the eternal security through a set of heterogeneous guards, i.e., guards with different ranges,
\item presents an algorithm for an appropriate placement and movement strategies of these heterogeneous guards to ensure eternal security.
\end{itemize}
 

%


\acresetall 

%



The remainder of the paper is organized as follow: Section~\ref{sec:preliminaries} introduces the terms that will be used throughout the paper. Section~\ref{sec:prob_formulation} formulates the problem addressed in this paper. Section~\ref{sec:proposed_solution} provides details of the proposed solution, and also presents an algorithm for decomposing a graph into clusters for the eternal security. Section~\ref{sec:example} illustrates the algorithm through an example and presents its evaluation. Finally, Section~\ref{sec:conclusion} summarizes the work presented in this paper.

\section{PRELIMINARIES}
\label{sec:preliminaries}
A graph $G(V,E)$, with a vertex set $V(G)$ and an edge set $E(G)$ is a simple, undirected graph. For simplicity, notations $V$ and $E$ are used for the vertex set and edge set of a graph respectively. An edge between vertices $u$ and $v$ is represented by $u\sim v$. Moreover, terms vertex and \textit{node} are used interchangeably. 
A \textit{complete} subgraph is induced by the vertex set $W\subseteq V$ whenever $u\sim v \in E$ for all $u,v \in W$. A subset of vertices inducing a complete subgraph is called a \textit{clique}. A clique that can not be extended by including one or more adjacent vertices is a \textit{maximal clique}. In other words, a clique not contained in any other clique of a larger size is a maximal clique. The determination of all maximal cliques in a graph is known as the \textit{maximal clique decomposition} of a graph. The distance between two vertices $u$ and $v$ in $G$, denoted by $d(u,v)_G$, is the length of the shortest path between $u$ and $v$ in $G$. Here, the \textit{path length} is the number of edges in a path. A path length of $r$ is sometimes referred to as the $r$-\textit{hop}. The \textit{diameter} of a graph, $diam(G)$, is $\max\; d(u,v)_G,\;\forall u,v\in G$. The $r^{th}$ \textit{power of a graph} $G$, denoted by $G^r$, is a graph with $V(G^r)=V(G)$ and $u\sim v \in E(G^r)$, whenever $d(u,v)_{G}\le r$.  

\begin{figure*}[!htb]
\begin{center}
\includegraphics[scale=0.65]{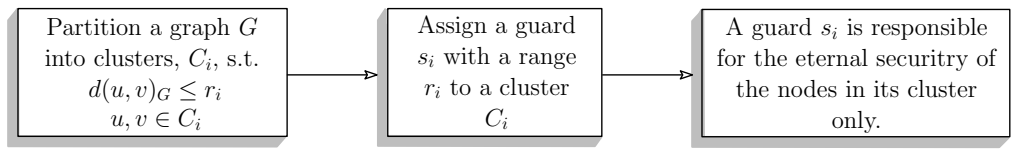}
\caption{A scheme for eternally securing a graph by making clusters.}
\label{fig:motivation}
\end{center}
\end{figure*}

\section{PROBLEM FORMULATION}
\label{sec:prob_formulation}
In this section, the problem of securing a network against a sequence of intruder attacks through a set of heterogeneous guards is formulated.

Let a network of agents interacting with each other be represented by a graph $G$, in which the vertex set $V$ represents agents, and the edge set $E$ corresponds to interactions between agents. Let $\mathcal{S}$ be a set of guards, in which each guard $s_i\in\mathcal{S}$ has some sensing and response range $r_i$ and is located on some vertex of $G$. A guard with the \textit{range}~$r_i$ can detect and respond to an intruder attack on a vertex that is at most $r_i$ hops away from it by marching towards the attacked vertex. The vertex at which a guard $s_i$ is present at time $k$ is described by the map $f$,
\begin{equation}
f: \; (\mathcal{S},k)\;\rightarrow\;V
\end{equation}

$f_k(s_i)$ is used to denote the vertex where $s_i$ guard is located at time instant $k$. Moreover, $f_k(\mathcal{S}) = \{f_k(s_i)\mid s_i\in \mathcal{S}\}$ is defined.

A \emph{vertex $v$ is secured} at time $k$ whenever it lies within a range of at least one guard at time instant $k$, i.e.,
\begin{equation}
\label{eqn:vsec}
\exists s_i:\; d(f_k(s_i),v)_G\;\le \;r_i
\end{equation}

If (\ref{eqn:vsec}) is true for all the vertices in a graph, then \textit{$G$ is secured} against an intruder at time $k$, and we say that $f_k(\mathcal{S})$ is a \textit{secure configuration} of guards in $\mathcal{S}$.

In the case of an intruder attack at some vertex $u\in V$ at time $k$, a guard $s_i$ securing $u$ will move from $f_k(s_i)$ to $u$ along the edges of a graph to counter an intruder. This results into a new configuration of guards, $f_{k+1}(\mathcal{S})$ at time $k+1$. If $f_{k+1}(\mathcal{S})$ is also a secure configuration, then the graph remains secured against an intruder attack.

\begin{definition}(Eternal Security)
A graph is eternally secure if for any $k$, a secure configuration of guards at time $k$ results into another secure configuration at time $k+1$ as a result of the movement of some guard $s_i\in\mathcal{S}$ from the vertex $f_k(s_i)$ to $f_{k+1}(\mathcal{S})$.
\end{definition}


Here it is assumed that at any time instant, there can be an attack only at a single vertex, and a single guard moves to counter this intruder. In other words, $\left | f_{k+1}(\mathcal{S})~-~f_k(\mathcal{S})\right |\le1$. Later, it is shown that under certain conditions, this assumption can be relaxed to $\left| f_{k+1}(\mathcal{S})-f_k(\mathcal{S})\right|\le\left |\mathcal{S}\right|$. The objective is to investigate the following problem,

\vspace{0.2cm}
\textit{How can a graph be eternally secured against a sequence of intruder attacks using a set of guards $\mathcal{S}$, where guards can have different ranges?}
\vspace{0.2cm}

Thus, the notion of eternal security in graphs has three major aspects; 

(a) Existence of a solution, i.e., if it is possible to eternally secure a graph through a given set of guards and their ranges. 

(b) Appropriate deployment of guards on various vertices in a graph. 

(c) Strategy for moving a right guard to counter an intruder attack such that the resulting configuration of guards is also secure.

\section{PROPOSED SOLUTION}
\label{sec:proposed_solution}
In this section, a scheme to distribute a given number guards with various ranges, among various vertices in a graph to make it eternally secure is presented. Let $\mathcal{S} = [s_1,s_2,\cdots,s_\sigma]$ be a vector of given guards, and $\textit{\textbf{r}} =[r_1,r_2,\cdots,r_\sigma]$ be a vector of their ranges, where $r_i$ is the range of a guard $s_i$. The proposed solution starts with the following simple observation,

\noindent
\textbf{Observation 1:}
{A single guard with a range $r$ makes a graph $G$ with a diameter $d$ eternally secure, if and only if $r\ge d$.}

The above observation provides a systematic way of distributing guards in $\mathcal{S}$ with their corresponding ranges in $\textit{\textbf{r}}$ to make a graph eternally secure. 

The proposed approach partitions a graph into clusters, and assigns a guard to each cluster which is then responsible for securing the nodes in its cluster only. These clusters are formed such that the distance between any two nodes of the same cluster is not greater than the range of the guard assigned to that cluster, i.e., $d(u,v)_G\le r_i$, where $u$ and $v$ are the vertices of the same cluster $C_i$, and $r_i$ is the range of the guard $s_i$ assigned to $C_i$. Since the distance between any two nodes in $C_i$ is not greater than $r_i$, guard $s_i$ sufficient for the eternal security of all the nodes in $C_i$. A block diagram of the scheme is shown in Fig.~\ref{fig:motivation}.


As an example, consider a graph shown in Fig.~\ref{fig:clusters}. Let there be three guards, $s_1,s_2$, and $s_3$, with ranges $1,2$ and $3$ respectively. The vertices of $G$ are partitioned into three clusters, and guards $s_1,s_2$ and $s_3$ are assigned to clusters $C_1,C_2$ and $C_3$ respectively. It is to be noted that for any cluster $C_i$, $d(u,v)_G\le r_i$, $\forall u,v\in C_i$. Therefore, a guard $s_i$ can always detect and respond to an intruder attack on some vertex in $C_i$ making $G$ eternally secure.

\begin{figure}[!htb]
\centerline{\epsfig{figure=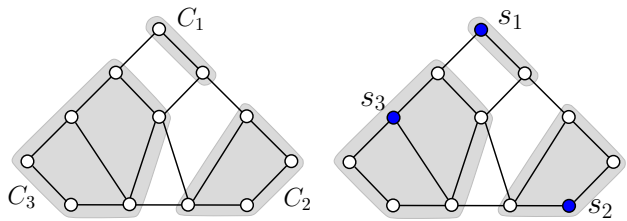,scale=0.69}}
\caption{\label{fig:clusters}An example illustrating the partitioning of graph vertices into clusters for eternal security. The guards $s_1,s_2,s_3$, with ranges $1,2$, and $3$ respectively are assigned to the clusters $C_1,C_2$, and $C_3$.}
\end{figure}

Under a secure configuration of guards within a graph, a node may be secured by more than one guard. In the case of an attack on that node, a response by one of the guards may result into another secure configuration, while a response by some other guard might produce a non-secure configuration of guards as shown in Fig.~\ref{fig:response}. Thus, for the eternal security, it is crucial to determine a right guard to be used to counter an attack. The clustering approach is particularly useful for that matter as guards will now respond to attacks on vertices in their clusters only. For a given number of guards and ranges, an effective partitioning of graph vertices into clusters is now discussed.

\begin{figure*}[!htb]
\centering
\subfigure[]{%
\includegraphics[scale = 1]{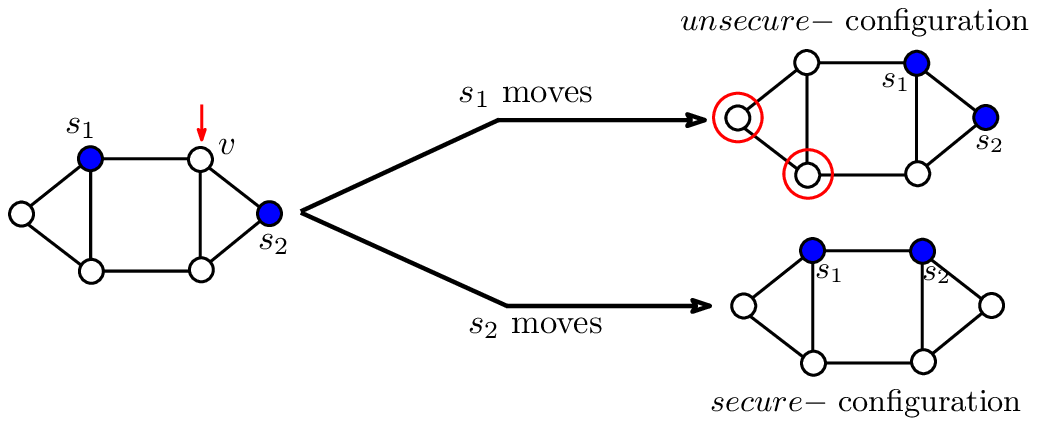}
\label{fig:response_a}}
\quad
\quad
\subfigure[]{%
\includegraphics[scale = 0.75]{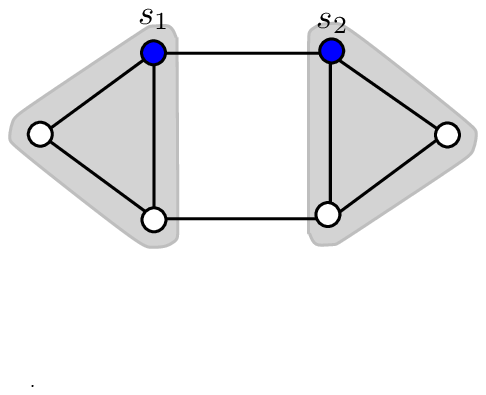}
\label{fig:response_b}}
\caption{(a) A secure configuration of guards $s_1$ and $s_2$, each having a range $1$ is shown. In the case of an attack at the vertex $v$, both guards can counter the attack since $d(s_1,v)=d(s_2,v)=1$. However, movement of the guard $s_1$ to $v$ results into an un-secure configuration as the circled nodes are not secured by any guard. On the other hand, movement of $s_2$ to counter the attack produces a secure configuration. (b) The vertices are partitioned into two clusters, each having a single guard. Each guard is responsible for the security of the vertices in its cluster only.}
\label{fig:response}
\end{figure*}

\subsection{Decomposition of a Graph into Clusters}
The objective is to obtain clusters of maximal sizes and assign guards with specific ranges to these clusters to eternally secure all the nodes. A guard is assigned to a cluster in such a way that the pair-wise distance between any two nodes in the cluster is at most equal to the range of the guard in that cluster. For a given graph and a set of guards along with their ranges, the aim is to decompose a graph into clusters such that the clusters include (cover) the maximum number of nodes in the graph.

The notion of graph power can be useful for the decomposition of a graph into clusters for the eternal security purpose. A guard with a range $r$ can eternally secure the nodes in $G$ that are within a distance of $r$ hops from each other. In other words, a guard with a range $r$ can
eternally secure all the vertices that induce a complete subgraph in the $r^{th}$ power of $G$ (i.e., $G^{r}$). Thus, for a guard with a range $r$, a cluster with the maximum number of vertices can
be obtained by maximal clique decomposition of $G^r$, and then selecting a maximum
clique. All the vertices in the maximum clique of $G^r$ can then be eternally secured
by a guard with a range $r$. For guards with various ranges (i.e., $r_i$), corresponding
clusters can be obtained by repeating the same procedure, i.e. by computing a maximal clique decomposition of $G^{r_i}$ and then selecting a clique with the
maximum number of vertices that are not yet secured. This process is formally defined in Algorithm~\ref{algo} and discussed in detail in the following section. 


\subsection{Main Algorithm}
This section presents an algorithm for partitioning a graph into clusters for the eternal security of a graph through a given set of guards with their respective ranges.

Let $\boldsymbol{\alpha}$ be the vector containing the ranges of guards. The $i^{th}$ element of $\boldsymbol{\alpha}$ is denoted by $\alpha_i$. Moreover, $\boldsymbol{\beta}$ be the vector in which the $i^{th}$ element, denoted by $\beta_i$, represents the number of guards with the range $\alpha_i$. For example, $\boldsymbol{\alpha}=[4,\; 2,\; 1]$, and $\boldsymbol{\beta}=[1,\; 3,\; 2]$ indicates that a graph has a single guard with a range $4$, three guards with a range $2$, and two guards have a range $1$. Furthermore, let $V_{uncov}$ be the set of vertices that are not included in any cluster.

In the initial phase, for each $\alpha_i$, maximal clique decomposition of $G^{\alpha_i}$ is performed. Out of all the cliques obtained, clique containing the maximum number of uncovered elements, say $m$ is selected. Let $m$ belongs to the clique decomposition of $G^{\alpha_j}$. If there exists a guard with the range $\alpha_j$ that has not been assigned to any cluster, then a cluster consisting of the vertices in $m$ is formed, and a guard with a range $\alpha_j$ is assigned to the cluster. This procedure is repeated until all guards are assigned to clusters, or all vertices are covered.

\begin{algorithm}
\caption{Decomposing a graph into clusters for the eternal security}
\begin{algorithmic} 
\STATE \textbf{Input:} $G,\;\boldsymbol{\alpha,\; \beta}$
\STATE \textbf{Initial:} $g=1;\;c_i=1,\;\forall\;i\in\{1,\cdots\left|\boldsymbol{\alpha}\right|\};\;V_{uncov}=V$
\FOR {$i=1$ to $\left|\boldsymbol{\alpha}\right|$}
\STATE $$ M_i\leftarrow \texttt{MaxClique}(G^{{\alpha_i}}) $$
\ENDFOR
\WHILE {$g \le $ Total no. of guards}
\STATE $$\mathcal{M} \leftarrow \bigcup_{\substack{i=1\\c_i\le\beta_i}}^{\left|\boldsymbol{\alpha}\right|}M_i$$
\STATE Find $m\in\mathcal{M}$ such that $\left|m\cap V_{uncov}\right|$ is maximum.
\STATE Let $M_j$ be clique decomposition that contains $m$ and has $c_j\le\beta_j$.
\STATE Vertices in $m$ constitute a cluster $\mathcal{C}_g$, and a guard with a range $\alpha_j$ is assigned to the cluster.
\STATE $c_j \leftarrow c_j + 1$;\; $g \leftarrow g+1$
\STATE $V_{uncov} \leftarrow V_{uncov} - m$
\ENDWHILE
\end{algorithmic}
\label{algo}
\end{algorithm}

\begin{figure*}[!htb]
\centerline{\epsfig{figure=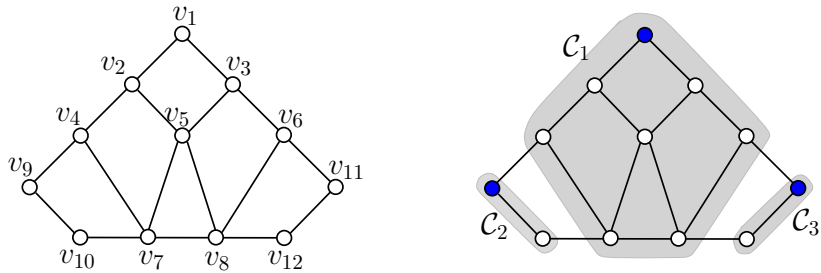,scale=0.8}}
\caption{\label{fig:example}A network with twelve nodes. $\mathcal{C}_1$ cluster contains the guard with the range $3$. Both $\mathcal{C}_1$ and $\mathcal{C}_2$ clusters have guards with the range 1.}
\end{figure*}

The algorithm uses maximal clique decomposition of a graph. Maximal clique decomposition is a well known combinatorial optimization problem. A number of theoretical and algorithmic results are available in the graph theory and computer science literature. Bron-Kerbosch algorithm \cite{bron1973algorithm} is a well known and widely used algorithm for finding maximal cliques in an undirected graph. Although other approaches have been reported, Bron-Kerbosch algorithm and its subsequent improvements are still regraded as one of the fastest and efficient ways to find maximal cliques \cite{stix2004finding}.

Furthermore, the problem of selecting cliques from the collection $\mathcal{M}$ is related to the \textit{maximum coverage problem}, in which the objective is to select a certain number of subsets from the collection to maximally cover the elements in the universal set. Maximum coverage problem is known to be NP-hard \cite{feige1998threshold}. In Algorithm 1, greedy approach is used for the selection of cliques. The greedy algorithm for the maximum coverage problem has an approximation ratio of $(1-1/e)$, which is essentially the best possible \cite{feige1998threshold}. Therefore, for the clustering problem, if \emph{Op} is the number of vertices that are included in some cluster by the optimal clustering algorithm, then

\begin{proposition}
For a given set of guards along with their ranges, Algorithm 1 includes at least $(1-1/e).Op$ number of vertices in some cluster. 
\end{proposition}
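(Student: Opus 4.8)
The plan is to recognize the clique–selection stage of Algorithm~1 as a maximum coverage problem and to run the classical greedy analysis on it. Take the \emph{universe} to be the set of vertices lying in at least one clique of $\bigcup_i M_i$, the available \emph{sets} to be the maximal cliques appearing in the $M_i$, the total \emph{budget} to be the number of guards $\sigma=\sum_i\beta_i$, and add the partition-type side constraint that at most $\beta_i$ of the chosen cliques come from $M_i$. Write $c_t$ for the number of vertices that lie in some cluster after $t$ iterations of the \texttt{WHILE} loop, so $c_0=0$ and the quantity to bound is $c_\sigma$; if the loop effectively stops earlier because $V_{uncov}=\emptyset$, then all coverable vertices are covered and $c\ge\textit{Op}$, so there is nothing to prove.

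The core of the argument is the per-iteration progress bound: after $t$ iterations, iteration $t+1$ adds at least $(\textit{Op}-c_t)/\sigma$ newly covered vertices, i.e.\ $c_{t+1}\ge c_t+(\textit{Op}-c_t)/\sigma$. To prove this, fix an optimal clustering and let $\mathcal{O}$ be its collection of $\sigma$ cliques, covering $\textit{Op}$ vertices in total. Since $c_t$ vertices are covered so far, at least $\textit{Op}-c_t$ of the optimally covered vertices are still in $V_{uncov}$; as these are covered by the $\sigma$ cliques of $\mathcal{O}$, an averaging argument produces a clique $C^\star\in\mathcal{O}$ with $|C^\star\cap V_{uncov}|\ge(\textit{Op}-c_t)/\sigma$. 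In its selection step Algorithm~1 picks, over all cliques in currently non-exhausted groups, one maximizing $|m\cap V_{uncov}|$, so it does at least as well as $C^\star$ \emph{provided $C^\star$ still sits in a non-exhausted group}. This is the point where the per-range guard budgets must be dealt with: here one uses that $G^{\alpha_i}$ is a subgraph of $G^{\alpha_j}$ whenever $\alpha_i\le\alpha_j$, so a clique that is a valid cluster for a range-$\alpha_i$ guard is also a valid cluster for every larger-range guard; hence the only way all admissible hosts for $C^\star$ are used up is that Algorithm~1 has already exhausted every guard whose range is at least the smallest range making $C^\star$ a valid cluster, and a charging argument (each such earlier greedy pick was, at the moment it was made, at least as profitable as $C^\star$ is now) recovers the same bound.

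Granting the per-iteration bound, unrolling the recurrence $\textit{Op}-c_{t+1}\le\bigl(1-\tfrac1\sigma\bigr)(\textit{Op}-c_t)$ from $t=0$ to $t=\sigma-1$ gives $\textit{Op}-c_\sigma\le\bigl(1-\tfrac1\sigma\bigr)^{\sigma}\textit{Op}\le e^{-1}\textit{Op}$, hence $c_\sigma\ge(1-1/e)\,\textit{Op}$. Since the vertices counted by $c_\sigma$ are exactly those placed in some cluster by Algorithm~1, this is the claimed bound, and the sharpness of the constant is inherited from the corresponding tightness for maximum coverage~\cite{feige1998threshold}.

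The step I expect to be the real obstacle is exactly the one flagged in the second paragraph: justifying the averaging/greedy comparison in the presence of the per-range guard budgets rather than in the plain maximum coverage setting, where the $(1-1/e)$ guarantee is classical. The nested structure of the graph powers $G^{\alpha_1}\subseteq G^{\alpha_2}\subseteq\cdots$ (for increasing ranges) is what prevents the group constraint from degrading the ratio; if one is content to treat the clique–selection sub-problem as ordinary maximum coverage and simply invoke the greedy bound of~\cite{feige1998threshold}, the argument collapses to the recurrence above.
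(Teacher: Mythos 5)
You have correctly located the crux, but your patch for it does not work, and in fact the step you flag as ``the real obstacle'' is a genuine hole. The paper itself offers no argument here beyond the sentence preceding the proposition: it treats the clique-selection stage as ordinary maximum coverage and invokes the classical greedy $(1-1/e)$ bound of Feige. That is exactly your ``collapse to the recurrence'' fallback, and it is only valid when all guards share a single range, since the per-range budgets $\beta_i$ turn the problem into maximum coverage subject to a partition-matroid constraint, for which greedy guarantees only a factor $1/2$ in general. Your attempted rescue via the nesting $G^{\alpha_i}\subseteq G^{\alpha_j}$ for $\alpha_i\le\alpha_j$ points the wrong way: it shows that a cluster intended for a \emph{small}-range guard can be served by a \emph{large}-range guard, but the failure mode is the opposite one --- greedy spends its large-range guards on cliques that the optimum would have covered with small-range guards, and once the large-range budget is exhausted, the optimum's large-range clusters are simply not cliques in any remaining $G^{\alpha_i}$ and are unavailable. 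At that point no averaging or charging argument can certify a clique covering $(\textit{Op}-c_t)/\sigma$ new vertices, so the recurrence $c_{t+1}\ge c_t+(\textit{Op}-c_t)/\sigma$ fails.

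Concretely: take one guard of range $2$ and one of range $1$; let $B$ be a clique of $G$ on $N+5$ vertices, let $A=\{a_1,\dots,a_N\}$ induce a star centered at $a_1$, and join $a_1$ to $b_1\in B$. The maximal cliques of $G^2$ are $B\cup\{a_1\}$ (size $N+6$) and $A\cup\{b_1\}$ (size $N+1$), so greedy assigns the range-$2$ guard to $B\cup\{a_1\}$; the leaves $a_2,\dots,a_N$ are pairwise non-adjacent in $G$, so the range-$1$ guard then covers one more vertex, for a total of $N+7$, while the optimum covers $2N+5$. The ratio tends to $1/2<1-1/e$, so the per-iteration bound --- and the proposition's constant, by this route --- cannot be established as you propose. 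To salvage a clean statement one must either restrict to a single range (where your recurrence and the paper's citation are both correct) or prove a weaker constant for the budgeted case; your writeup should not present the charging step as recovering $(1-1/e)$.
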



The following section gives a detailed example of the proposed algorithm to decompose a graph into clusters for the eternal security. It also evaluates the proposed algorithm in terms of the average distance moved by a guard in order to ensure eternal security of the graph.

\section{Example}
\label{sec:example}
As an example consider a network in Fig.~\ref{fig:example}. Let there be three guards $s_1,s_2$ and $s_3$ with ranges $1,1$ and $3$ respectively. These guards are to be distributed among the nodes to make the network eternally secure. Thus, the objective is to find an initial secure configuration of guards, and to specify a strategy to make sure that only the right guard moves to counter an attack on some node. Both of these goals can be achieved by decomposing a graph into clusters, and assigning an appropriate guard to each of these clusters. Using algorithm~\ref{algo}, the procedure starts by arranging guards' ranges in an array, $\alpha = [3\;,1]$. An array containing the number of guards corresponding to each of these ranges is $\beta = [1, \; 2]$. Since, $\alpha_1 = 3$, $G^3$ is computed. Maximal clique decomposition of $G^3$ using Bron-Kerbosch \cite{bron1973algorithm} algorithm gives,
\begin{equation*}
\begin{split}
 M_1 = & \{v_1,\cdots,v_8\},\{v_1, v_3,\cdots,v_6,v_8,v_{11},v_{12}\},\\
& \{v_1,\cdots,v_7, v_{10}\},\{v_1,v_2,v_4,\cdots,v_7,v_9,v_{10}\},\\
& \{v_1,\cdots,v_6,v_8,v_{12}\}.
\end{split}
\end{equation*}

Similarly, maximal clique decomposition of $G$ gives,
\begin{equation*}
\begin{split}
 M_2 = & \{v_1,v_2\},\{v_2, v_4\},\{v_2,v_7\},\{v_2,v_{10}\},\{v_4,v_5,v_6\},\\
& \{v_1,v_3\},\{v_3,v_4\},\{v_3,v_8\}, \{v_3,v_{12}\}, \{v_{11},v_{12}\},\\
& \{v_5,v_7\}, \{v_6,v_8\},\{v_8,v_{11}\}, \{v_7,v_9\}, \{v_9,v_{10}\}.\\
\end{split}
\end{equation*}

The counters $c_1$ and $c_2$ corresponding to $M_1$ and $M_2$ respectively are initialized to $1$. Moreover, $V_{uncov}=V$ initially. In the first iteration (of the while loop), $\mathcal{M} = M_1\cup M_2$. Since $\{v_1,\cdots,v_8\}\in M_1$ covers the most uncovered elements\footnote{If two sets cover the same number of uncovered vertices, a set is selected at random.}, the cluster $\mathcal{C}_1$ consisting of the vertices $\{v_1,\cdots,v_8\}$ is formed and the guard with the range $3$ is assigned to the cluster. The counter $c_1$ is incremented to $2$, and $V_{uncov}$ is set to $\{v_{9},\cdots,v_{12}\}$. In the next iteration, since $c_1>\beta_1$, $\mathcal{M}$ contains only $M_2$. It represents the fact that no more clusters can be formed that require guards with the range $3$. Thus, $\{v_9,v_{10}\}\in M_2$ covering the most vertices in $V_{uncov}$ is selected for the cluster $\mathcal{C}_2$. A guard with a range $1$ is then assigned to $\mathcal{C}_2$. Similarly, $\mathcal{C}_3$ comprising of $\{v_{11},v_{12}\}$ is formed and the remaining guard with the range $1$ is assigned to it. This cluster decomposition is illustrated in Fig.~\ref{fig:example}.

\section{Average Distance Moved by a Guard}
For the eternal security of a graph, guards with various sensing and response ranges are located on the vertices of a graph. In the case of an intruder attack on some vertex, they move along the edges of a graph through a path of vertices, thus covering a certain path length. The vertices of a graph are divided into clusters and all the vertices in a cluster are secured by a single guard with a range at least the maximum distance between any two nodes in the cluster. Since the maximum distance between any two nodes varies from cluster to cluster, the path lengths covered by guards to counter an attack also vary. The average distance a guard moves to eternally secure a graph may become a significant design parameter for various applications. Thus, analysis of the average distance moved by a guard for the eternal security of a graph by a cluster decomposition is also provided. Here, we assume that the probability of a vertex being attacked by an intruder is same for all the vertices (i.e., uniform probability distribution).

\begin{proposition}
Let $G$ be a graph whose vertices are decomposed into $\ell$ clusters $C_1, C_2,\cdots, C_{\ell}$ such that $\bigcup\limits_{i=1}^{\ell} C_i = V(G)$. For every cluster $C_i$, let there be a guard $s_i$ eternally securing all of the vertices in $C_i$ only. Then, the average distance (path length) moved by a guard to counter an intruder attack on some vertex $v\in V(G)$, denoted by $\tau$, is
\begin{equation}
\label{avgD}
\tau = \frac{1}{n}\left[\sum\limits_{i=1}^{\ell}\frac{1}{(n_i-1)}\sum\limits_{u,v\in C_i}d_G(u,v)\right]
\end{equation}
where $n_i$ is the number of vertices in the cluster $C_i$, and $n$ is the total number of vertices in $G$.
\end{proposition}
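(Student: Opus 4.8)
The plan is to obtain $\tau$ by the law of total expectation, conditioning on which vertex the intruder strikes. Since the clusters produced by Algorithm~1 are pairwise disjoint and cover $V(G)$, each vertex $v$ lies in exactly one cluster, say $v\in C_i$, and an attack on $v$ is answered by the unique guard $s_i$ assigned to $C_i$, which at that instant occupies some vertex of $C_i$. The first — and main — step is to pin down the distribution of that occupied vertex: I claim it is uniform over the $n_i-1$ vertices of $C_i$ other than $v$. The justification is that, under the clustering strategy, $s_i$ responds only inside $C_i$ and (by Observation~1, since the pairwise distance within $C_i$ is at most $r_i$) can always reach the attacked vertex, so after a long run $s_i$ sits on the site of the most recent previous attack that fell inside $C_i$; as attacks are i.i.d.\ uniform on $V(G)$, that site is uniform on $C_i$ and independent of the current attack, and conditioning on the current attack being at $v$ (together with the modelling premise that the intruder does not strike the vertex the guard already occupies) leaves it uniform on $C_i\setminus\{v\}$. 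Hence
\[
\mathbb{E}\big[\text{distance moved}\ \big|\ v\text{ attacked}\big]=\frac{1}{n_i-1}\sum_{\substack{u\in C_i\\ u\neq v}} d_G(u,v).
\]

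Second, I would average over the attacked vertex. Each $v$ is attacked with probability $1/n$, so
\[
\tau=\frac1n\sum_{v\in V(G)}\mathbb{E}\big[\text{distance moved}\ \big|\ v\text{ attacked}\big]
=\frac1n\sum_{i=1}^{\ell}\ \sum_{v\in C_i}\frac{1}{n_i-1}\sum_{\substack{u\in C_i\\ u\neq v}} d_G(u,v).
\]
The factor $1/(n_i-1)$ is constant on $C_i$, so it comes out of the sum over $v\in C_i$; merging the two inner summations into a single sum over ordered pairs $(u,v)\in C_i\times C_i$ (diagonal terms vanish because $d_G(v,v)=0$) gives precisely (\ref{avgD}).

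The rearrangement in the second step is routine double counting; the real work is the probabilistic claim in the first step. I expect the delicate points to be (i) arguing that the guard's position actually reaches — and remains in — the ``last attack within the cluster'' regime, and (ii) the modelling choice that the intruder never attacks the guard's current vertex, which is exactly what produces the denominator $n_i-1$ rather than $n_i$. I would state this assumption explicitly, noting that if one instead permits an attack on the guard's own vertex (a zero-length response), the identical computation yields the same formula with $n_i$ in place of $n_i-1$.
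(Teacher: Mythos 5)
Your proposal is correct and follows essentially the same route as the paper: the paper computes the per-vertex average $\rho(v)=\frac{1}{n_i-1}\sum_{u\in C_i}d(u,v)_G$, averages it over the cluster to get $\rho(C_i)$, and then takes the weighted average $\tau=\frac{1}{n}\sum_i n_i\rho(C_i)$, which is exactly your law-of-total-expectation computation reorganized. The only difference is that you make explicit (and attempt to justify via stationarity) the assumption that the responding guard is uniformly distributed over the $n_i-1$ vertices of $C_i$ other than the attacked one, which the paper leaves implicit in its definition of $\rho(v)$.
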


\begin{proof}
Let $v\in C_i$, then the average distance between $v$ and some other $u\in C_i$ in $G$ is defined as,
\begin{equation*}
\rho(v) = \frac{1}{n_i - 1} \sum\limits_{u\in C_i} d(u,v)_G
\end{equation*}
The average distance between the vertices in $C_i$, denoted by $\rho(C_i)$, is the average value of the distances between all pairs of vertices in $C_i$, i.e.,

\begin{equation*}
\begin{split}
\rho(C_i) = &\frac{1}{n_i}\sum\limits_{v\in C_i} \rho(v)
=\frac{1}{n_i(n_i - 1)}\sum\limits_{u,v\in C_i} d(u,v)_G
\end{split}
\end{equation*}

This is the average distance a guard $s_i$ moves in a cluster $C_i$ to counter an intruder attack on some $v\in C_i$. Since there are $\ell$ clusters with various number of vertices and guards with various ranges, the average distance a guard moves in response to an attack is the weighted average of $\rho(C_i)$.
\begin{equation}
\begin{split}
\tau & = \frac{1}{n} \sum\limits_{i=1}^{\ell} n_i \rho(C_i)\\
 &  = \frac{1}{n} \left[\sum\limits_{i=1}^{\ell} \frac{1}{(n_i -1)} \sum\limits_{u,v\in C_i} d(u,v)_G\right]
 \end{split}
 \end{equation} \end{proof}

For the network in Fig.~\ref{fig:example}, the average path length covered by a guard to counter an intruder attack as computed by (\ref{avgD}) is 1.523.

\noindent
\textit{Remarks:} In the proposed cluster decomposition, for any two vertices $u,v\in \mathcal{C}_i$, $d(u,v)_G\le r_i$, where $r_i$ is the range of the guard assigned to the cluster $\mathcal{C}_i$. Note that the distance considered here is the distance between the vertices in $G$, $d(u,v)_G$, and not in the subgraph induced by the vertices in the cluster $\mathcal{C}_i$, $d(u,v)_{\mathcal{C}_i}$. Since $d(u,v)_G\le d(u,v)_{\mathcal{C}_i}$, a guard with a range less than the diameter of the subgraph induced by the vertices in $\mathcal{C}_i$ may be sufficient for the eternal security of all the nodes in $\mathcal{C}_i$ as shown in the example of Fig.~\ref{fig:diam}. This makes our approach better than the one where a graph is simply decomposed into induced subgraphs with the diameters given by the guards' ranges.

\begin{figure}[!htb]
\centerline{\epsfig{figure=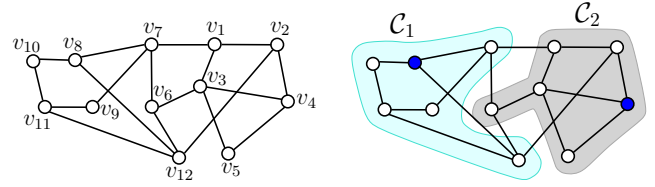,scale=0.74}}
\caption{\label{fig:diam}A network with twelve nodes. Two guards, each having a range $2$, are available. In (b) the cluster decomposition using Algorithm I is shown. Note that although $d(v_2,v_6)_{\mathcal{C}_2}=3$, both $v_2$ and $v_6$ are included in the same cluster $\mathcal{C}_2$ having a guard with a range $2$. It is possible as $d(v_2,v_6)_G = 2$. Also, there is no way to divide the given graph into two induced subgraphs each having a diameter at most 2. However, it is possible to have two clusters such that the distance between any two nodes of the same cluster is at most 2 as shown in the second figure.}
\end{figure}

\section{CONCLUSIONS}
\label{sec:conclusion}
In this article, the issue of guarding multi-agent systems against an infinite sequence of intruder attacks has been studied. Using graph theoretic abstractions, the problem of guarding networks against an infinite sequence of intruder attacks is formulated as the eternal security problem in graphs. Moreover, a way to achieve eternal security in graphs through heterogeneous guards is presented. For a given set of guards along with their respective ranges, a graph can be decomposed into clusters and an appropriate guard can be assigned to each cluster to achieve eternal security. The issues of guards' deployment and their movement strategies have been addressed using the proposed cluster decomposition approach. 
It is to be noted that in the case of a multi-attack situation where each cluster is attacked by at most one intruder, the proposed solution still ensures eternal security. However, in the case of multiple attacks within the same cluster, secure configuration of guards might not be maintained, and hence, eternal security might not be achieved. As a part of the future work, we want to extend this work to incorporate multiple attacks, even within the same cluster, while achieving eternal security. 



%

\section{Acknowledgments}
This work is supported in part by the National Science Foundation (CNS-1238959, CNS-1035655), and the Air Force Research Laboratory under Award FA8750-14-2-0180.

\end{document}